\documentclass[letterpaper]{article}
\usepackage{times} \usepackage{helvet} \usepackage{courier}
\usepackage{graphicx} \usepackage{subfig} \usepackage{amsfonts}
\usepackage{amsmath} \usepackage{multirow} \usepackage{amssymb}
\usepackage{amsthm} \frenchspacing \setlength{\pdfpagewidth}{8.5in}
\setlength{\pdfpageheight}{11in}
%\pdfinfo{ /Title (On Covering Codes and Upper Bounds
%for the Dimension of Simple Games) /Author (Martin Olsen)}
%\pdfinfo{ /Title (On Covering Codes and Upper Bounds
%for the Dimension of Simple Games) /Author (Authors)}
%\setcounter{secnumdepth}{0}

\newtheorem{theorem}{Theorem} \newtheorem{example}{Example}
\newtheorem{lemma}{Lemma} \newtheorem{corollary}{Corollary}
\newtheorem{definition}{Definition}

\begin{document} 
% The file aaai.sty is the style file for AAAI Press 
% proceedings, working notes, and technical reports. %
\title{On Covering Codes and Upper Bounds for the Dimension of Simple Games}
\author{Martin Olsen\\ BTECH, Aarhus University\\ Denmark\\ martino@btech.au.dk\\ }
%\author{Authors\\ Affiliations\\ email addresses\\ }

\maketitle

\begin{abstract} 
Consider a situation with $n$ agents or players where some of the
players form a coalition with a certain collective objective. Simple
games are used to model systems that can decide whether coalitions
are successful (winning) or not (losing). A simple game can be
viewed as a monotone boolean function. The dimension of a simple
game is the smallest positive integer $d$ such that the simple game
can be expressed as the intersection of $d$ threshold functions
where each threshold function uses a threshold and $n$ weights.
Taylor and Zwicker have shown that $d$ is bounded from above by the
number of maximal losing coalitions. We present two new upper bounds
both containing the Taylor/Zwicker-bound as a special case. The
Taylor/Zwicker-bound imply an upper bound of ${n \choose
\frac{n}{2}}$. We improve this upper bound significantly by showing
constructively that $d$ is bounded from above by the cardinality of
any binary covering code with length $n$ and covering radius $1$.
This result supplements a recent result where Olsen et al. showed
how to construct simple games with dimension $|C|$ for any binary
constant weight SECDED code $C$ with length $n$. Our
result represents a major step in the attempt to close the
dimensionality gap for simple games.
\end{abstract}

\section{Introduction}

Consider a multi agent system where a coalition of agents is formed
in order to solve a given task and where we have to predict if the
coalition will succeed or not. We restrict our attention to cases
obeying the natural monotonicity condition saying that the superset
of any successful coalition will also succeed. In such a multi agent
system we need some sort of system that can compute a prediction:
''yes'' or ''no''. The so-called {\em simple games} model such
systems and simple games can also be viewed as monotone boolean
functions or monotone hypergraphs. The agents in a simple game are
referred to as players and successful and unsuccessful coalitions
will be referred to as winning and losing coalitions respectively.

A {\em weighted game} is a special type of simple game where every
player is assigned a weight and where a coalition is successful if
and only if the total weight of the players in the coalition is
meeting or exceeding a given quota. Any simple game can be
implemented as the intersection of one or more weighted games and
the {\em dimension}~\cite{TaZw99} of a simple game is the minimum
number of weighted games we need to implement the simple game in
this way. The dimension has a direct influence on the storage
requirements and the efficiency for such a system. Real world voting
systems can be seen as simple games and the dimension aspects of
real world voting systems have been studied intensively within the
field of {\em Computational Social Choice}.

In this paper, we consider the maximum dimension, $d_n$, that we can
obtain for a simple game with $n$ players.
Taylor and Zwicker~\cite{TaZw99} have shown that ${n\choose
{\lfloor n/2\rfloor}}$ is an upper bound for $d_n$ by demonstrating
how to implement any simple game as the intersection of no more than
${n\choose {\lfloor n/2\rfloor}}$ games (details will follow later).
The main contribution of this paper is a constructive major
improvement of the generic upper bound provided by
Taylor and Zwicker that we present in the form of two new upper
bounds both representing stronger versions of the upper bound
presented by Taylor and Zwicker.

We apply a technique that -- to the best of our knowledge -- has not
been used before to translate any simple game into the intersection
of relatively few simple games. Recently,
Olsen et al.~\cite{Olsen2016} demonstrated a major
improvement in the lower bound on $d_n$ by using theory on {\em
error correcting codes}. We use a significantly different and novel
approach based on {\em covering codes} to obtain our upper bounds.
The gap between the upper and lower bound has for some $n$ gone from
a factor $n$ to $\sqrt{n}$ (roughly) through our improvement and to
a factor $\ln n \sqrt{n}$ in general. We conclude by suggesting a
direction that might lead to a further reduction of the
dimensionality gap.
  
\section{Related Work}

Taylor and Zwicker~\cite{TaZw99} have constructed a sequence
of games with dimension at least $2^{\frac{n}{2}-1}$ for $n=2k$ with
$k$ odd. The dimension of the simple games presented by
Taylor and Zwicker was later shown to be exactly
$2^{\frac{n}{2}-1}$~\cite{Olsen2016}.
Freixas and Puente~\cite{Freixas2001265} have shown
how to construct another type of simple games with dimension
$2^{\frac{n}{2}-1}$ for all even $n$. This lower bound of $d_n$ was
recently improved significantly by
Olsen et al.~\cite{Olsen2016} by establishing a
connection to the theory on error correcting codes resulting in the
following lower bound:
\begin{equation}\label{eq:lowerbound}
d_n \geq \frac{1}{n} {n \choose \lfloor \frac{n}{2} \rfloor} \in 2^{n-o(n)}\enspace .
\end{equation}
Here it might be useful to consider the following identity for
comparison with the previous lower bound:
\begin{equation}\label{eq:nchoosen2}
{n \choose \lfloor \frac{n}{2} \rfloor} = (1-o(1))\sqrt{\frac{2}{\pi n}} 2^{n} \enspace .
\end{equation}
Kurz and Napel~\cite{Kurz2015} also present a general approach for the determination of lower bounds for
the dimension of a simple game.

A maximal losing coalition in a simple game is a losing coalition
that has the property that adding any player will turn it into a
winning coalition. Let $L^M$ denote the collection of maximal
losing coalitions. Taylor and Zwicker~\cite{TaZw99}
demonstrate how to express any simple game as an intersection of at
most $|L^M|$ weighted games implying an $|L^M|$-upper bound for
$d_n$. Kurz and Napel~\cite{Kurz2015} provide heuristic
algorithms based on integer linear programming for constructing a
representation of a given simple game as an intersection of weighted
games.

As mentioned earlier, the dimension of real world voting systems has
been the focus for several studies. The Amendment of the Canadian
constitution \cite{Kilgour1983} and the US federal legislative
system \cite{Taylor1993170} have dimension $2$. The voting systems
of the Legislative Council of Hong Kong~\cite{Cheung2014} and the
Council of the European Union under its Treaty of Nice
rules~\cite{Freixas2004415} have a dimension of exactly $3$.
Kurz and Napel~\cite{Kurz2015} have established that the
dimension of the voting system of the Council of the European Union
under its Treaty of Lisbon rules is between $7$ and $13\,368$.

There are obviously alternative ways for representing simple games.
The {\em codimension} ~\cite{NotionsOfDimension} is the minimum
number of weighted games it takes to represent a simple game as a
union of weighted games. Considering arbitrary combinations of
unions and intersections leads to the notion of {\em boolean
dimension}, which is introduced and studied in
\cite{BooleanCombinations}.

\subsection{Outline of the Paper}

The next section introduces the notation and the formal definitions
for simple games. We also give a brief introduction for readers not
familiar with covering codes. The algorithm behind our first upper
bound on $d_n$ is then presented in two sections. The first of the
sections demonstrates how the algorithm works and the second section
contains the technical details and proofs including a formal
statement of the upper bound in terms of a theorem. The second upper
bound and our second theorem is then presented in a section and
finally, we wrap the paper up in the conclusion.

\section{Preliminaries} %notation 

In this section, we introduce the concepts and definitions that we
consider in this paper. We start by presenting formal definitions
for simple games. After that we give a brief introduction to
covering codes.

\subsection{Simple Games}
\label{subsec_simple_games}

We now formally the define simple games:
\begin{definition}
A simple game $\Gamma=(N, W)$ is a pair where $N = \{1,\dots,n\}$
for some positive integer $n$
and $W\subseteq 2^N$ is a collection of subsets of $N$ such that:
\begin{itemize} 
%\item[(1)] $\emptyset\notin W$;
%\item[(2)] $N\in W$;
%\item[(3)] $S\subseteq T\subseteq N$ and $S\in W$ implies $T\in W$. 
\item $\emptyset\notin W$
\item $N\in W$
\item $S\subseteq T\subseteq N$ and $S\in W$ implies $T\in W$
\end{itemize}
\end{definition} 
The members of $N$ are referred to as players and subsets of $N$ are
referred to as coalitions. A coalition is said to be winning if it
is a member of $W$ and otherwise it is said to be losing.
The first condition says that the coalition with no players loses
and the second condition ensures that the coalition containing all
players wins. The third condition is the monotonicity condition that
says that any superset of a winning coalition is also winning. The
set of losing coalitions is denoted by $L = 2^N \setminus W$.

A coalition is a maximal losing coalition if it is losing and all of
its supersets are winning. The collection of coalitions $L^M \subset
2^N$ contains all the maximal losing coalitions. The collection of
minimal winning coalitions $W^m$ is defined accordingly. A simple
game $\Gamma$ can be defined by either of the sets $W$, $L$, $W^m$
or $L^M$ .

The weighted games that form a proper subset of the simple games are defined as follows:
\begin{definition}
A simple game $\Gamma=(N,W)$ is weighted if there exists a
\emph{quota} $q \in \mathbb{R}_+$ and \emph{weights} $w_1, w_2,
\ldots , w_n \in \mathbb{R}_+$ such that $S \in W$ if and only
if $\sum_{i \in S} w_i \geq q$. In this case we use the notation $\Gamma=[q;w_1,w_2,\ldots,w_n]$.
\end{definition}

The intersection $\Gamma_1 \cap \Gamma_2$ of the games $\Gamma_1(N,
W_1)$ and $\Gamma_2(N, W_2)$ is the simple game with players $N$ and
$W = W_1 \cap W_2$. As previously mentioned,
Taylor and Zwicker~\cite{TaZw99} have shown that any simple
game can be expressed as the intersection of $|L^M|$ weighted games:
For any game $\Gamma$, we have $\Gamma = \cap_{T \in L^M} \Gamma_T$
where a coalition $S$ wins in $\Gamma_T$ if and only if $S \not
\subseteq T$. A weighted representation of $\Gamma_T$ using weights
$0$ and $1$ is given as follows: the game has quota $1$ and a player
in $N \setminus T$ is assigned the weight $1$ and all other players
are assigned weight $0$.

The dimension of a simple game can now be formally defined:
\begin{definition}
The dimension $d$ of a simple game $\Gamma$ is the smallest positive
integer such that $\Gamma = \cap_{i = 1}^d \Gamma_d$ where the games
$\Gamma_i$, $i \in \{ 1, 2, \ldots, d\}$, are weighted.
\end{definition}
In this paper, we let $d_n$ denote the maximum dimension that we can
observe for a simple game with $n$ players.

A maximal losing coalition cannot contain another maximal losing
coalition, so we can apply Sperner's Lemma~\cite{Lubell1966} and get
the following upper bound on $L^M$: $\left|L^M\right|\le {n \choose
{\lfloor n/2\rfloor}}$. From the construction by Taylor and Zwicker, we conclude the following:
\begin{equation}
\label{eq:d-upperbound} 
d_n \leq |L^M| \leq {n \choose \lfloor \frac{n}{2} \rfloor}
\enspace .
\end{equation}
The main objective of this paper is to improve this upper bound.

We will illustrate the definitions by an example.
\begin{example}
Let the simple game $\Gamma(N,L^M)$ be defined as follows:
$$N = \{1, 2, 3, 4, 5, 6, 7\}$$
$$L^M = \{\{1, 2, 3\}, \{3, 4, 5, 6\}\} \enspace .$$
The coalition $\{1, 2\}$ loses in $\Gamma$ since $\{1, 2\} \subseteq \{1, 2, 3\}$. The coalition $\{1, 4\}$ wins since $\{1, 4\} \not \subseteq \{1, 2, 3\}$ and $\{1, 4\} \not \subseteq \{3, 4, 5, 6\}$. 

If we use the construction by Taylor and Zwicker, we get this representation of $\Gamma$ as the intersection of two weighted games:
$$\Gamma = [1;0,0,0,1,1,1,1] \cap [1;1,1,0,0,0,0,1] \enspace .$$
The dimension of $\Gamma$ is $2$ since $\Gamma$ cannot be weighted. We can realize this using a proof by contradiction that illustrates a classical way of establishing lower bounds for the dimension:

Assume that $\Gamma$ was weighted with quota $q$. The coalitions $\{1,2\}$ and $\{4,5\}$ are both losing so the total weight of the players in the coalitions must be strictly smaller than $2q$. The two coalitions can exchange players and both win after the exchange: $\{1,4\}$ and $\{2,5\}$. We now arrive at a contradiction since the total weight of the players must be at least $2q$.
\end{example}

We now turn our attention to covering codes.

\subsection{Covering Codes}

A binary {\em code} is technically a set of bit vectors. A bit
vector $x = x_1x_2 \ldots x_n \in \{0,1\}^n$ can be viewed as the coalition $S_x = \{i
\in N: x_i = 1\}$. We will use this perspective and see a binary
code as a collection of coalitions in order to align the notation of
binary codes and simple games. The Hamming distance between two bit vectors is the number of coordinates where the two bit vectors differ. Using the perspective just described, we can define the Hamming distance between two coalitions $x$ and $y$ as follows:
$$d(x,y) = |x \setminus y|+|y \setminus x| \enspace .$$

A binary {\em covering
code}~\cite{Cohen1997} of length $n$ and {\em covering radius} $1$ can
consequently be perceived as a collection $C \subset 2^N$ of coalitions
such that any coalition is within Hamming distance $0$ or $1$ from
at least one member of $C$: $\forall x \in 2^N \exists c \in C:
d(x,c) \leq 1$. As an example, covering codes have applications within data compression. In this paper, $K_n$ denotes the
minimum cardinality of a binary covering code of length $n$ with
covering radius $1$ 

\begin{example}
The following set represents a binary covering code with length $4$
and covering radius $1$:
$$C = \{ \{\}, \{4\}, \{1, 2, 3\},\{1,2,3,4\}\}$$
As an example, the coalition $\{2, 4\}$ is covered by the coalition $\{4\}$ in $C$ since the Hamming distance between these coalitions is $1$.

It is not possible to cover all subsets of
$\{1,2,3,4\}$ with fewer coalitions -- we cannot cover more than $3
\cdot 5 = 15$ coalitions with $3$ coalitions and there are $16$
coalitions in total so our example shows that $K_4 = 4$.
\end{example}

As we saw earlier, a coalition cannot cover more than $n+1$ other coalitions including
itself within radius $1$ so we need at least $2^n/(n+1)$ coalitions
for a binary covering code with covering radius $1$. The well known
Hamming codes~\cite{berlekamp2015algebraic} defined for $n=2^m-1$
are so called {\em perfect} codes that meet this lower bound. For
$n=2^m$, we have the slightly smaller value in the denominator: $K_n
= 2^n/n$~\cite{OSTERGARD1998}. In general, it is hard to
establish exact values for $K_n$ but it is not hard to prove the
upper bound $K_n \leq (\ln (n+1) + 1) 2^n/(n+1)$ using a classical
result\footnote{Consider the graph where we have a vertex for each
coalition and an edge between two vertices if and only if the
distance between the corresponding coalitions is $1$. A covering
code corresponds to a dominating set in this graph where all
vertices have degree $n$. Alon and Spencer present a
lower bound for the size of such a dominating set. This upper bound is
probably well known within the coding theory community.} from
Alon and Spencer~\cite{alon:probabilistic} on
computing dominating sets.

\section{The First Upper Bound}

From now on, any simple game will be defined using maximal losing coalitions. Given a simple game $\Gamma(N, L^M)$, we now present an algorithm
producing a representation of $\Gamma$ as an
intersection of no more than $K_n$ weighted games. In this section, we
will show how the algorithm works step by step. Each step will
contain a formal explanation but we will also illustrate how each
step works through an example. The technical details including the proof of
correctness will follow in the next section.

The key idea for the algorithm is the result of a simple observation expressed by the following lemma:
\begin{lemma}
\label{lem:keyidea}
If $L^M = \cup_{i = 1}^p L_i$ then $$\Gamma(N,L^M) = \cap_{i = 1}^p \Gamma(N,L_i) \enspace .$$
\end{lemma}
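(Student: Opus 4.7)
The plan is to unpack what $\Gamma(N, L_i)$ denotes and then translate the identity $L^M = \bigcup_i L_i$ directly into a set identity on winning coalitions. Since $L^M$ is an antichain, any subset $L_i \subseteq L^M$ is also an antichain, so $\Gamma(N, L_i)$ is a well-defined simple game whose losing coalitions are precisely the subsets of members of $L_i$; equivalently, a coalition $S$ wins in $\Gamma(N, L_i)$ if and only if $S \not\subseteq T$ for every $T \in L_i$.

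The central step will be to rewrite the set of winning coalitions of the left-hand side as $\{S \subseteq N : S \not\subseteq T \text{ for all } T \in L^M\}$ and the set of winning coalitions of the intersection on the right-hand side as
$$\bigcap_{i=1}^{p} \{S \subseteq N : S \not\subseteq T \text{ for all } T \in L_i\} = \left\{S \subseteq N : S \not\subseteq T \text{ for all } T \in \bigcup_{i=1}^{p} L_i\right\},$$
where the equality is the standard rewriting of ``for all $i$ and all $T \in L_i$'' as ``for all $T$ in the union''. Substituting $\bigcup_{i=1}^{p} L_i = L^M$ then gives equality of the two winning sets, which is exactly the claimed equality of simple games.

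The proof is essentially a tautology once the definitions are unpacked, so I do not expect a genuine obstacle. The only administrative point to verify is that each $\Gamma(N, L_i)$ satisfies the three axioms of a simple game: monotonicity and $N \in W_i$ are automatic (since no element of $L^M$ equals $N$), while $\emptyset \notin W_i$ requires each $L_i$ to be non-empty. This condition is trivially satisfied whenever the lemma is applied in the sequel, since any useful cover of $L^M$ consists of non-empty pieces.
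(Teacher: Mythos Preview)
Your proposal is correct and follows essentially the same approach as the paper: both arguments reduce the claim to the observation that a coalition $S$ is contained in some member of $L^M$ if and only if it is contained in some member of some $L_i$, which is immediate from $L^M=\bigcup_i L_i$. The paper phrases this via losing coalitions while you phrase it via winning coalitions, and you are additionally explicit about the well-definedness of $\Gamma(N,L_i)$ (in particular the need for $L_i\neq\emptyset$), a point the paper leaves implicit.
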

\begin{proof}
Now assume that $x \subseteq N$ is losing in $\Gamma(N,L^M)$. There
must be an $y \in L^M$ such that $x \subseteq y$. This means that
$x$ loses in any of the games $\Gamma(N,L_i)$ with $y \in L_i$. On the other
hand, $x$ will lose in $\Gamma(N,L^M)$ if $x$ loses in $\cap_{i =
1}^p \Gamma(N,L_i)$ since $x$ must be a subset of at least one $y$
that is a member of $L^M$.
\end{proof}
The objective for our algorithm is to use the lemma and partition $L^M$ into a
small number of sets such that all the corresponding games are weighted.

The game that we use as an example is the following simple game with
players $\{1, 2, 3, 4\}$: A coalition wins if and only if both of
the players $1$ and $2$ are members of the coalition or both of the
players $3$ or $4$ are members. As an example, the coalition
$\{1,2,4\}$ wins since both of the players $1$ and $2$ have joined
the coalition. On the other hand, the coalition $\{1,3\}$ is losing
-- and in fact it is a maximal losing coalition since this coalition
will turn into a winning coalition if any of the other players joint
it. As a side remark, this game belongs to a class of simple games that has been
studied in detail by Freixas and Puente~\cite{Freixas2001265}.

We are now ready to describe how our algorithm works:

\subsection{Input} The input to the algorithm is a simple game $\Gamma(N,L^M)$. Example:
$$N = \{1, 2, 3, 4\}$$ $$L^M = \{\{1, 3\}, \{1, 4\}, \{2, 3\}, \{2,
4\}\}$$

\subsection{Step $1$} Construct a collection of coalitions $C
\subset 2^N$ such that any coalition in $L^M$ is within Hamming distance $0$
or $1$ from at least one coalition in $C$: $\forall x \in L^M
\exists c \in C: d(x,c) \leq 1$. Example: $$C = \{ \{4\}, \{1, 2, 3\}\}$$

\subsection{Step $2$} Let $\{L_c\}_{c \in C}$ be a partition
of $L^M$ such that all members of
$L_c$ have distance $0$ or $1$ to $c$: $\forall x \in L_c: d(x,c)
\leq 1$. Example:  $$L_{\{4\}}=\{\{1,
4\}, \{2, 4\}\}$$ $$L_{\{1, 2, 3\}}=\{\{1, 3\}, \{2, 3\}\}$$

\subsection{Step $3$} For each $c \in C$, we now represent
$\Gamma(N,L_c)$ as a weighted game $[q^c; w^c_1, w^c_2, \ldots,
w^c_n]$. We prove that $\Gamma(N,L_c)$ is weighted for any $c \in C$ and provide the details on how to compute the weights and the quota
in Lemma~\ref{lem:weighted} below. Example: $$\Gamma(N,
L_{\{4\}})=[2;1,1,2,0]$$ $$\Gamma(N,L_{\{1, 2, 3\}})=[2;1,1,0,2]$$

\subsection{Output} Finally, we can use Lemma $\ref{lem:keyidea}$ and express $\Gamma$ as the
intersection of the weighted games that we have constructed in Step
$3$: $ \Gamma = \cap_{c \in C} [q^c; w^c_1, w^c_2, \ldots,
w^c_n]$. Example: $$\Gamma = [2;1,1,2,0] \cap [2;1,1,0,2]$$
This concludes the description of our algorithm.

In Step $1$, we can actually use a binary covering code of length $n$ with covering radius $1$ for {\em any} simple game involving $n$ players. As a consequence, any simple game can be implemented as the intersection of no more than $K_n$ weighted games. This allows us to set up the following upper bounds on $d_n$ using the facts on $K_n$ from the previous section:
\begin{equation}
\label{eq:kn1}
d_n \leq K_n = \frac{2^n}{n+1} \mbox{ for } n = 2^m-1
\end{equation}
\begin{equation}
\label{eq:kn2}
d_n \leq K_n = \frac{2^n}{n} \mbox{ for } n = 2^m
\end{equation}
\begin{equation}
\label{eq:kn3}
d_n \leq K_n \leq (\ln(n+1)+1) \frac{2^n}{n+1} \mbox{ for all } n \enspace .
\end{equation}
In all three cases, the upper bounds are considerably smaller than
${n \choose \lfloor \frac{n}{2} \rfloor}$ which can be seen from
(\ref{eq:nchoosen2}). The first two upper bounds represent an
improvement on roughly a factor $\sqrt{n}$ and the bounds are all
$o({n \choose \lfloor \frac{n}{2} \rfloor})$.

It is important to observe that it might be a bad idea to use a
binary covering code as a ''one size fits all''-solution since we do
not exploit the structure of $L^M$ for the specific game at hand if
we follow this approach.

Osterg{\aa}rd and Kaikkonen~\cite{OSTERGARD1998} have listed some
upper bounds for $K_n$ that we also can use as upper bounds for
$d_n$. Table~\ref{tab:bounds} presents these upper bounds together
with lower bounds from~\cite{Olsen2016}.

\begin{table}
\centering
\caption{This table displays lower and upper bounds for $d_n$ combining our findings with the results from \cite{OSTERGARD1998} and \cite{Olsen2016}.}
%\begin{center}
\begin{tabular}{|r|c|c|c|}
\hline
$n$ & Lower bound & Upper bound & ${{n}\choose{{\lfloor n/2\rfloor}}}-1$ \\
\hline
6 &  4 &  12 & 19\\
7 &  7 &  16 & 34\\
8 &  14 &  32 & 69\\
9 &  18 &  62 & 125\\
10 & 36 &  120 & 251\\
11 & 66 &  192 & 461 \\
12 & 132 &  380 & 923\\
13 & 166 &  704 & 1715\\
14 & 325 &  1408 & 3431\\
15 & 585 &  2048  & 6434\\
\hline
\end{tabular}
%\end{center}
\label{tab:bounds}
\end{table}

\section{Technical Details for the First Upper Bound}
\label{sec:technicaldetails}

We now take another look at our approach where we formally prove our
first upper bound and state the bound as a theorem.

We have to ensure is that our algorithm is
correct in the sense that it is able to express any input game as an
intersection of weighted games. It is clearly possible to produce
the collection $C$ in Step $1$ and to construct the partition of
$L^M$ in Step $2$. The algorithm uses the decomposition approach
suggested by Lemma $\ref{lem:keyidea}$ so we only have to check that
all the games considered in Step $3$ are weighted.

\begin{lemma}
\label{lem:weighted}
$\Gamma(N,L_c)$ is weighted for any $c \in C$.
\end{lemma}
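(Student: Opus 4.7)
The plan is to exploit the antichain property of $L^M$ to narrow down the possible shapes of $L_c$ to just a handful, and then give an explicit weighted representation in each case.

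First I would observe that every element $x \in L_c$ satisfies $d(x,c) \in \{0,1\}$, so $x$ is either $c$ itself, of the form $c \setminus \{i\}$ for some $i \in c$, or of the form $c \cup \{j\}$ for some $j \in N \setminus c$. Since $L_c \subseteq L^M$ is an antichain, no two elements of $L_c$ can be comparable. Because any coalition of the form $c \setminus \{i\}$ is a proper subset of any coalition of the form $c \cup \{j\}$, and $c$ itself is strictly sandwiched between these, the set $L_c$ must fall into one of three mutually exclusive shapes: either (i) $L_c = \{c\}$, or (ii) $L_c = \{c \setminus \{i\} : i \in I\}$ for some nonempty $I \subseteq c$, or (iii) $L_c = \{c \cup \{j\} : j \in J\}$ for some nonempty $J \subseteq N \setminus c$. (An empty $L_c$ can simply be omitted from the intersection.)

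Next I would give an explicit weighted representation in each case. For (i), use the Taylor--Zwicker construction: $q^c = 1$, weight $1$ for players in $N \setminus c$, and weight $0$ for players in $c$. For (ii), set $w^c_i = 1$ for $i \in I$, $w^c_i = 0$ for $i \in c \setminus I$, $w^c_i = |I|$ for $i \in N \setminus c$, and $q^c = |I|$; the verification reduces to checking that a subset $x \subseteq c$ has weight $|x \cap I|$ (hence meets the quota exactly when $I \subseteq x$, i.e.\ when $x$ is not contained in any $c \setminus \{i\}$ with $i \in I$), and that any $x$ containing a player outside $c$ automatically has weight at least $|I|$. Case (iii) is dual: set $w^c_i = 0$ for $i \in c$, $w^c_i = 1$ for $i \in J$, $w^c_i = 2$ for $i \in N \setminus c \setminus J$, and $q^c = 2$; then a coalition $x$ fails to meet the quota precisely when $x \setminus c$ is empty or a singleton in $J$, which is exactly when $x$ is a subset of some $c \cup \{j\}$ with $j \in J$. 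The running example matches case (iii) at $c = \{4\}$ with $J = \{1,2\}$ and case (ii) at $c = \{1,2,3\}$ with $I = \{1,2\}$, giving a convenient sanity check.

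I do not expect a real obstacle. The only delicate point is the antichain step that rules out a \emph{mixed} $L_c$ containing both a coalition of the form $c \setminus \{i\}$ and one of the form $c \cup \{j\}$, or containing $c$ together with either; this is where incomparability of the maximal losing coalitions is essential. Once the structural classification is in place, the weight assignments are straightforward and correctness in each case is a short case analysis on whether $x \subseteq c$, whether $x \setminus c$ is a singleton, and where that singleton sits relative to $J$.
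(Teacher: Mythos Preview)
Your proposal is correct and follows essentially the same approach as the paper: the same antichain-based trichotomy into the cases $L_c=\{c\}$, $L_c\subseteq\{c\setminus\{i\}:i\in c\}$, and $L_c\subseteq\{c\cup\{j\}:j\notin c\}$, followed by the same explicit weight assignments (your $I$ and $J$ are the paper's $R$ and $A$). The only cosmetic difference is the ordering of the cases and the notation.
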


\begin{proof}
All the members of $L_c$ are maximal losing coalitions so it is not possible to find two members of $L_c$ such that one of them contains the other. This means there are three cases that we have to consider: 1) $\forall x \in L_c: x \subset c$, 2) $L_c = \{c\}$, or 3) $\forall x \in L_c: c \subset x$. We now show how to express $\Gamma(N,L_c)$ as a weighted game in all three cases.

Case $1$: The set $L_c$ consists of coalitions were exactly one
element has been removed from $c$ for each member of $L_c$. Let $R$
denote the set of removed elements: $R = \cup_{x \in L_c} (c
\setminus x)$. Let us consider a set $S$ that is winning and is
contained in $c$. For any $x \in L_c$, we know that $S$ is not
contained in $x$ so $S$ must contain the element that has been
removed from $c$ to form $x$. In other words, $S$ cannot win in
$\Gamma(N,L_c)$ unless $S \setminus c \neq \emptyset$ or $R
\subseteq S$. On the other hand, it is not hard see that $S$ wins if
$S \setminus c \neq \emptyset$ or $R \subseteq S$. This means that
we can implement $\Gamma(N,L_c)$ as the weighted game with $q = |R|$
and weights as follows: $w_i = |R|$ for $i \not \in c$, $w_i = 1$
for $i \in R$ and $w_i =0$ for the remaining players.

As an example, we consider the game
$\Gamma(N,L_c)$ with $N=\{1, 2, 3, 4, 5\}$, $c=\{1, 2, 3, 4\}$ and
$L_c = \{\{1, 2, 3\}, \{1, 2, 4\}, \{1, 3, 4\}\}$. For this game we
have $R = \{2,3,4\}$ and $\Gamma(N,L_c) = [3;0,1,1,1,3]$.

Case $2$: In this case, we can use the weighted game with quota $q=1$
where we assign the weight $0$ to all players in $c$ and the weight
$1$ to all other players.

Case 3: All the members of $L_c$ are constructed by adding exactly
one element to $c$. Let $A$ denote the set of added elements: $A =
\cup_{x \in L_c} (x \setminus c)$. If a coalition $S$ wins and $S$
does not contain any players in $c \cup A$ then $S$ has to contain
at least two players in $A$ (otherwise $S$ would lose). Conversely,
$S$ wins if $S$ contains a player not in $c \cup A$ or at least two
of the players in $A$. This implies that $\Gamma(N,L_c)$ can be
expressed as a weighted game with quota $q=2$ and the following
weight distribution: $w_i = 2$ for $i \not \in c \cup A$, $w_i = 1$
for $i \in A$ and $w_i =0$ for the players in $c$.

An example for case $3$: $\Gamma(N,L_c)$ with $N=\{1, 2, 3, 4, 5, 6, 7\}$, $c=\{1, 2, 3\}$ and
$L_c = \{\{1, 2, 3, 4\}, \{1, 2, 3, 5\}, \{1, 2, 3, 6\}\}$. Here we
have $A = \{4,5,6\}$ and $\Gamma(N,L_c) = [2;0,0,0,1,1,1,2]$.
\end{proof}

We are now ready to formally state the main contribution of our paper:
\begin{theorem}
\label{thm:bound1}
Let $\Gamma(N,L^M)$ be a simple game and let $C \subset 2^N$ be a collection of coalitions such that $\forall x \in L^M \exists c \in C: d(x,c) \leq 1$. The dimension of $\Gamma(N,L^M)$ is bounded from above by $|C|$.  
\end{theorem}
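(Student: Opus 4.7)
The proof is essentially a matter of stitching together Lemma~\ref{lem:keyidea} and Lemma~\ref{lem:weighted} via the covering hypothesis on $C$, so the plan is to make that assembly precise.

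First I would use the covering hypothesis to partition $L^M$. For each $x \in L^M$ choose some $c(x) \in C$ with $d(x,c(x)) \leq 1$ (existence is guaranteed by the hypothesis; ties can be broken arbitrarily, say by a fixed total order on $C$). Then set $L_c = \{x \in L^M : c(x) = c\}$ for every $c \in C$. By construction the sets $\{L_c\}_{c \in C}$ form a partition of $L^M$, and every member of $L_c$ lies at Hamming distance at most $1$ from $c$, which is exactly the hypothesis of Lemma~\ref{lem:weighted}.

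Next I would apply Lemma~\ref{lem:keyidea} to the decomposition $L^M = \bigcup_{c \in C} L_c$, obtaining
\[
\Gamma(N,L^M) \;=\; \bigcap_{c \in C} \Gamma(N,L_c).
\]
By Lemma~\ref{lem:weighted}, each $\Gamma(N,L_c)$ with $L_c \neq \emptyset$ is a weighted game. If some $L_c$ is empty, then $\Gamma(N,L_c)$ has no losing coalitions other than $\emptyset$ itself, which is trivially weighted (e.g.\ $[1;1,1,\ldots,1]$) and can in any case be dropped from the intersection without changing it. Hence $\Gamma(N,L^M)$ is the intersection of at most $|C|$ weighted games, which by definition of dimension yields the bound $d \leq |C|$.

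The only subtlety worth flagging is the existence of the partition itself: the hypothesis only asserts that each $x \in L^M$ is covered by \emph{some} $c \in C$, not that the coalitions $L_c$ are disjoint a priori. The fixed-choice construction of $c(x)$ resolves this cleanly, and it is the one step where a little care is needed; everything else is a direct invocation of the two preceding lemmas. No new combinatorial work is required beyond this bookkeeping, so I do not anticipate a genuine obstacle.
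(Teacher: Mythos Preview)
Your proposal is correct and matches the paper's proof: the paper's argument is precisely to invoke the algorithm (Steps~1--3), which partitions $L^M$ by the covering sets $L_c$ and then applies Lemma~\ref{lem:keyidea} and Lemma~\ref{lem:weighted} exactly as you describe. Your added remarks on tie-breaking for the partition and on handling empty $L_c$ simply make explicit what the paper leaves implicit.
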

\begin{proof}
We can use our algorithm to produce a representation of $\Gamma$ as the intersection of $|C|$ weighted games. Lemma $\ref{lem:keyidea}$ and Lemma $\ref{lem:weighted}$ guarantee that our algorithm is correct.
\end{proof}
It is important to note that the special case $C = L^M$ corresponds to the $|L^M|$-upper bound presented by Taylor and Zwicker~\cite{TaZw99}.

If we have a binary covering code with covering radius $1$ then we
can use it as $C$ in the theorem. We therefore have the following
corollary:
\begin{corollary}
$$d_n \leq K_n$$
\end{corollary}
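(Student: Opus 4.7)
The plan is to derive the corollary as an immediate specialization of Theorem~\ref{thm:bound1}. The theorem bounds the dimension of a particular simple game $\Gamma(N,L^M)$ by the size of any collection $C \subset 2^N$ that covers every maximal losing coalition within Hamming distance $1$. A binary covering code of length $n$ and covering radius $1$ covers \emph{every} element of $2^N$ within distance $1$, so in particular it covers $L^M$, regardless of what the specific game is.

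Concretely, I would fix an arbitrary simple game $\Gamma(N,L^M)$ on $n$ players and let $C \subset 2^N$ be a binary covering code of length $n$ and covering radius $1$ of minimum cardinality, so that $|C| = K_n$. By the defining property of a covering code, for every $x \in 2^N$ there exists $c \in C$ with $d(x,c) \leq 1$; restricting this quantifier to $x \in L^M \subseteq 2^N$ gives exactly the hypothesis of Theorem~\ref{thm:bound1}. Applying the theorem yields that the dimension of $\Gamma(N,L^M)$ is at most $|C| = K_n$. Since $\Gamma$ was arbitrary among simple games on $n$ players, taking the supremum over such games gives $d_n \leq K_n$.

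There is really no technical obstacle here: the corollary is a one-line consequence of Theorem~\ref{thm:bound1}, and the only thing to check is that a covering code satisfies the hypothesis of that theorem for every simple game simultaneously. The conceptual content, namely the algorithmic translation of a covering of $L^M$ into weighted games, has already been absorbed into Lemma~\ref{lem:keyidea} and Lemma~\ref{lem:weighted}. So the proof is simply to invoke the theorem with the specific choice $C = $ a minimum covering code.
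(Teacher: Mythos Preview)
Your proposal is correct and matches the paper's own reasoning: the corollary is obtained by applying Theorem~\ref{thm:bound1} with $C$ chosen to be a minimum binary covering code of length $n$ and radius $1$, which covers all of $2^N$ and hence in particular $L^M$ for any simple game. The paper states exactly this one-line justification immediately before the corollary and does not give a separate formal proof.
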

It is important to stress that we only require $C$ to ''cover'' the
set $L^M$ in the theorem above. We might be able to exploit the
structure of $L^M$ in order to achieve a better upper bound than in
the corollary where the underlying collection covers all possible
coalitions. As an example, we might use the fact that $L^M$ is a
Sperner family where no member contains another member of the
family. This explains why we have chosen to express the bound $d_n
\leq K_n$ as a corollary since the theorem is a stronger result.

\section{The Second Upper Bound}

In this section, we will once again use the key idea from
Lemma~\ref{lem:keyidea} and prove another upper bound generalizing
the $|L^M|$-upper bound presented by
Taylor and Zwicker~\cite{TaZw99}. This upper bound is related
to SECDED codes that are binary codes where any two of the members
have pairwise distance at least $4$.

\begin{theorem}
\label{thm:bound2}
Let $\Gamma(N,L^M)$ be a simple game. The dimension of
$\Gamma(N,L^M)$ is bounded from above by $\frac{1}{2}(|L^M|+|C|)$
for some collection $C \subseteq L^M$ of maximal losing coalitions
satisfying $\forall x, y \in C: d(x,y) \geq 4$.
\end{theorem}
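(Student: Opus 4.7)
The plan is to mimic the pair-up structure of Theorem~\ref{thm:bound1} but with a graph-matching twist. Define an auxiliary graph $G$ on the vertex set $L^M$ with an edge between $x$ and $y$ whenever $d(x, y) \leq 3$. Take any maximal matching $M$ in $G$ (greedy selection suffices) and let $C \subseteq L^M$ be the set of unmatched vertices. Maximality of $M$ forces $C$ to be an independent set in $G$, so any two elements of $C$ have Hamming distance at least $4$, which is exactly the condition the theorem requires. By construction $|M| = (|L^M| - |C|)/2$.

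By Lemma~\ref{lem:keyidea} applied to the partition of $L^M$ into the pairs of $M$ and the singletons in $C$,
\[
\Gamma(N, L^M) \;=\; \Bigl(\bigcap_{\{x,y\} \in M} \Gamma(N, \{x,y\})\Bigr) \;\cap\; \Bigl(\bigcap_{c \in C} \Gamma(N, \{c\})\Bigr),
\]
an intersection of $|M| + |C| = \tfrac12(|L^M| + |C|)$ games. The singleton games are weighted via the Taylor/Zwicker construction (quota $1$, weight $1$ on the players in $N \setminus c$, weight $0$ on $c$), so the theorem reduces to a single pair lemma: for every $x, y \in L^M$ with $d(x,y) \leq 3$, the game $\Gamma(N, \{x, y\})$ is weighted.

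To prove this pair lemma, I would set $a = x \cap y$, $b = x \setminus y$, $c = y \setminus x$, and $D = N \setminus (x \cup y)$. Since $x$ and $y$ are incomparable maximal losing coalitions we have $|b|, |c| \geq 1$, and combined with $|b| + |c| = d(x,y) \leq 3$ this forces $(|b|, |c|) \in \{(1,1), (1,2), (2,1)\}$. A coalition $S$ loses in $\Gamma(N, \{x,y\})$ iff $S \cap D = \emptyset$ and either $S \cap b = \emptyset$ or $S \cap c = \emptyset$. I would propose the weighted representation with quota $q = |b| + |c|$, weight $0$ on each player of $a$, weight $|c|$ on each player of $b$, weight $|b|$ on each player of $c$, and weight $q$ on each player of $D$. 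Subsets of $x$ have total weight at most $|b| \cdot |c|$, which is strictly less than $|b| + |c|$ exactly because $\min(|b|,|c|) = 1$; subsets of $y$ are handled symmetrically; every other coalition either meets $D$ (contributing at least $q$) or meets both $b$ and $c$ (contributing at least $|b| + |c| = q$).

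The main obstacle is this pair lemma, since one must pin down the weight scheme that survives all three configurations of $(|b|,|c|)$ at once; once the explicit scheme above is written down, the verification is a routine check. Plugging the lemma into the decomposition above finishes the proof, giving dimension at most $|M| + |C| = \tfrac12(|L^M| + |C|)$. Note that taking $C = L^M$ (when $M = \emptyset$) recovers the Taylor/Zwicker bound $d \leq |L^M|$, so Theorem~\ref{thm:bound2} genuinely generalizes it.
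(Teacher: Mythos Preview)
Your proposal is correct and follows essentially the same approach as the paper: a maximal matching on $L^M$ under the ``distance $\leq 3$'' relation, with the unmatched vertices forming $C$, and a verification that each matched pair defines a weighted game. Your unified weight scheme (quota $|b|+|c|$, weights $|c|$ on $b$, $|b|$ on $c$, $q$ on $D$, $0$ on $a$) specializes exactly to the paper's explicit construction for the $(2,1)$ case and also cleanly covers the $(1,1)$ case the paper leaves to the reader.
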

\begin{proof}
Let $M$ be a maximal set of pairs $(x,y) \in L^M \times L^M$ such
that $x \neq y$ and $d(x,y) \leq 3$ and such that an element in
$L^M$ occurs in no more than one pair. We claim that the game
$\Gamma(N,\{x,y\})$ is weighted for any $(x,y) \in M$. We will
prove it for the case $d(x,y) = 3$ and leave the only remaining case
$d(x,y) = 2$ to the reader (there are no more cases since $L^M$ is a Sperner family).

Without loss of generality, we assume that $x \setminus y$ contains
two players and $y \setminus x$ contains one player. A coalition
wins in the game if and only if: 1) the coalition contains at least
one player in $N \setminus (x \cup y)$, or 2) the coalition contains
one of the players in $x \setminus y$ and the player in $y \setminus
x$. We implement the game $\Gamma(N,\{x,y\})$ as a weighted game
with quota $q=3$. The players in $N \setminus (x \cup y)$ get weight
$3$. The two players in $x \setminus y$ get weight $1$ and the
player in $y \setminus x$ gets weight $2$. All the players in $x
\cap y$ are assigned the weight $0$.

Let us illustrate the construction with the example with $N=\{1, 2,
3, 4, 5, 6, 7\}$, $x = \{1,2,3,4\}$ and $y = \{2,3,5\}$. The
corresponding weighted game is $[3;1,0,0,1,2,3,3]$.

Let $C$ be the set of coalitions that have not been paired in $M$.
All the coalitions in $C$ have pairwise distance at least $4$ since
$M$ is maximal. The pairs in $M$ and the coalitions in $C$
considered as single element sets constitute a partition of $L^M$
where all the corresponding games are weighted. This partition
consists of no more than $\frac{1}{2}(|L^M|-|C|)+|C|$ coalitions.
\end{proof}

A corollary of the theorem is as follows: 
\begin{corollary}
The dimension of $\Gamma(N,L^M)$ is less than $|L^M|$ if $L^M$ is not a SECDED
code.
\end{corollary}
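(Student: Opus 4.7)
The plan is to apply Theorem~\ref{thm:bound2} directly and exploit the fact that the non-SECDED hypothesis forces a non-trivial matching in its construction. First I observe that $L^M$ being a SECDED code means exactly that any two distinct elements of $L^M$ have Hamming distance at least $4$. So if $L^M$ is \emph{not} a SECDED code, there exist distinct $x, y \in L^M$ with $d(x,y) \leq 3$; because $L^M$ is a Sperner family, we actually have $d(x,y) \in \{2,3\}$.

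Next I would revisit the construction inside the proof of Theorem~\ref{thm:bound2}: one builds a maximal set $M$ of disjoint pairs $(x,y) \in L^M \times L^M$ with $d(x,y) \leq 3$, and then lets $C \subseteq L^M$ be the set of unmatched coalitions. The existence of the pair from the previous paragraph guarantees that $M$ is non-empty, so $|M| \geq 1$ and therefore $|C| \leq |L^M| - 2$. The theorem then gives
\begin{equation*}
d \leq \tfrac{1}{2}(|L^M| + |C|) \leq \tfrac{1}{2}(|L^M| + |L^M| - 2) = |L^M| - 1 < |L^M|,
\end{equation*}
which is the claimed strict inequality.

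There is really no technical obstacle here: the content of the corollary is essentially a restatement of Theorem~\ref{thm:bound2} in the contrapositive direction. The only thing one must be careful about is that the bound in Theorem~\ref{thm:bound2} is existential over $C$, and the proof of that theorem already constructs such a $C$ of size at most $|L^M| - 2|M|$; as long as at least one pair can be matched (which is exactly the failure of the SECDED property) we get $|C| \leq |L^M| - 2$ and hence $d < |L^M|$.
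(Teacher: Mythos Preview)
Your proposal is correct and follows exactly the intended line: the paper states the corollary without proof, treating it as immediate from Theorem~\ref{thm:bound2}, and your argument---observing that a non-SECDED $L^M$ forces the matching $M$ in that proof to be non-empty, hence $|C|\le |L^M|-2$---is precisely the natural way to spell this out.
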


\section{Conclusion}
We have presented two new upper bounds on the maximum dimension
$d_n$ for simple games with $n$ players. The bounds are related to
binary codes and they represent improvements of the $|L^M|$-upper
bound presented by
Taylor and Zwicker~\cite{TaZw99}.

The recent development~\cite{Olsen2016} for the lower bound of $d_n$ can be
illustrated as follows:
\begin{equation}
\label{eq:lb}
2^{\frac{n}{2}-1} \rightarrow \frac{1}{n} {n \choose \lfloor \frac{n}{2} \rfloor} = (1-o(1))\sqrt{\frac{2}{\pi n}} \frac{2^{n}}{n}  \leq d_n\enspace .
\end{equation}
On the other hand, one of the upper bounds in our paper represents the following improvement
with respect to the upper bound for $n=2^m-1$:
\begin{equation}
\label{eq:ub}
d_n\leq\frac{2^n}{n+1} \leftarrow(1-o(1))\sqrt{\frac{2}{\pi n}} 2^{n}={n \choose \lfloor \frac{n}{2} \rfloor}
\end{equation}
The dimensionality gap for the simple games is now considerably
smaller and the upper bound is roughly within a factor $\sqrt{n}$
away from the lower bound for some values of $n$.
 
As previously mentioned, we only have to cover $L^M$ with a binary
covering code with radius $1$ to obtain an upper bound on the
dimension as expressed by Theorem~\ref{thm:bound1}. It is not known
-- at least to the authors of this paper -- whether it is possible
but it seems plausible to improve the upper bound from (\ref{eq:ub})
by using the fact that $L^M$ has a certain structure.

The key idea behind our upper bounds is to decompose $L^M$ into a
union of collections of maximal losing coalitions such that any of
the simple games defined by the component collections are weighted.
This can be done in many ways and it is highly likely that there are
smarter decompositions than the ones presented in our paper. It is
an open problem to find smarter decompositions.

%\clearpage

\bibliographystyle{plain}
\bibliography{Olsen}

\begin{thebibliography}{10}

\bibitem{alon:probabilistic}
Noga Alon and Joel~H. Spencer.
\newblock {\em The Probabilistic Method}.
\newblock Wiley, New York, 1992.

\bibitem{berlekamp2015algebraic}
E.R. Berlekamp.
\newblock {\em Algebraic Coding Theory: Revised Edition}.
\newblock World Scientific, 2015.

\bibitem{Cheung2014}
W.-S. Cheung and T.-W. Ng.
\newblock A three-dimensional voting system in {H}ong {K}ong.
\newblock {\em European Journal of Operational Research}, 236(1):292--297,
  2014.

\bibitem{Cohen1997}
G.~Cohen, I.~Honkala, S.~Litsyn, and A.~Lobstein.
\newblock {\em Covering Codes}.
\newblock North-Holland Mathematical Library. Elsevier Science, 1997.

\bibitem{BooleanCombinations}
P.~Faliszewski, E.~Elkind, and M.~Wooldridge.
\newblock Boolean combinations of weighted voting games.
\newblock In {\em Proceedings of the 8th International Conference on Autonomous
  Agen ts and Multiagent Systems}, pages 185--192. International Foundation for
  Autonomous Agents and Multiagent Systems, 2009.

\bibitem{Freixas2004415}
J.~Freixas.
\newblock The dimension for the {E}uropean {U}nion {C}ouncil under the {N}ice
  rules.
\newblock {\em European Journal of Operational Research}, 156(2):415--419,
  2004.

\bibitem{NotionsOfDimension}
J.~Freixas and D.~Marciniak.
\newblock On the notion of dimension and codimension.
\newblock In Leon~A. Petrosyan and Nikolay~A. Zenkevich, editors, {\em
  Proceedings of the 3rd International Conference Game Theory and Ma nagement},
  pages 67--81. The International Society of Dynamic Games (Russian Chapter),
  2010.

\bibitem{Freixas2001265}
J.~Freixas and M.A. Puente.
\newblock A note about games-composition dimension.
\newblock {\em Discrete Applied Mathematics}, 113(2–3):265--273, 2001.

\bibitem{Kilgour1983}
D.M. Kilgour.
\newblock A formal analysis of the amending formula of {C}anada's
  {C}onstitution {A}ct, 1982.
\newblock {\em Canadian Journal of Political Science/Revue Canadienne de
  Science Politique}, 16:771--777, 1983.

\bibitem{Kurz2015}
Sascha Kurz and Stefan Napel.
\newblock Dimension of the lisbon voting rules in the {E}{U} {C}ouncil: a
  challenge and new world record.
\newblock {\em Optimization Letters}, 10(6):1245--1256, July 2016.

\bibitem{Lubell1966}
D.~Lubell.
\newblock A short proof of {S}perner's lemma.
\newblock {\em Journal of Combinatorial Theory}, 1(2):299--299, 1966.

\bibitem{Olsen2016}
Martin Olsen, Sascha Kurz, and Xavier Molinero.
\newblock On the construction of high-dimensional simple games.
\newblock In {\em Proceedings of the 22nd European Conference on Artificial
  Intelligence (ECAI 16)}, 2016.

\bibitem{OSTERGARD1998}
Patric~R.J. Osterg{\aa}rd and Markku~K. Kaikkonen.
\newblock New upper bounds for binary covering codes.
\newblock {\em Discrete Mathematics}, 178(1):165 -- 179, 1998.

\bibitem{TaZw99}
A.D. Taylor and W.S. Zwicker.
\newblock {\em Simple games: desirability relations, trading, and
  pseudoweightings}.
\newblock Princeton University Press, New Jersey, USA, 1999.

\bibitem{Taylor1993170}
Alan Taylor and William Zwicker.
\newblock Weighted voting, multicameral representation, and power.
\newblock {\em Games and Economic Behavior}, 5(1):170--181, 1993.

\end{thebibliography}

\end{document}